\renewcommand{\qed}{\flushright$\blacksquare$}
\algnewcommand\algorithmicinput{\textbf{Input:}}
\algnewcommand\INPUT{\item[\algorithmicinput]}
\algnewcommand\algorithmicoutput{\textbf{Output:}}
\algnewcommand\OUTPUT{\item[\algorithmicoutput]}
\let\oldReturn\Return
\renewcommand{\Return}{\State\oldReturn}
\def\thmheadbrackets#1#2#3{%
	\thmname{#1}\thmnumber{\@ifnotempty{#1}{ }\@upn{#2}}%
	\thmnote{ {\the\thm@notefont[#3]}}}
\newtheoremstyle{brakets}
{}
{}
{\itshape}
{}
{\bfseries}
{.}
{ }
{\thmheadbrackets{#1}{#2}{#3}}
\newtheoremstyle{defbrakets}
{}
{}
{\normalfont}
{}
{\bfseries}
{.}
{ }
{\thmheadbrackets{#1}{#2}{#3}}
\newtheoremstyle{defproblem}
{}
{}
{\normalfont}
{}
{\bfseries}
{.}
{ }
{\thmheadbrackets{#1}{#2}{#3}}
\newtheorem{thm}{Theorem}
\newtheorem*{thm-non}{Theorem} 
\theoremstyle{definition}
\theoremstyle{defbrakets}
\newtheorem{defn}{Definition}
\newtheorem{rem}{Remark}
\newtheorem{plm}{Problem}
\newtheorem*{plm-non}{Problem}
\DeclareFontFamily{OT1}{pzc}{}
\DeclareFontShape{OT1}{pzc}{m}{it}{<-> s * [1.10] pzcmi7t}{}
\DeclareMathAlphabet{\mathpzc}{OT1}{pzc}{m}{it}
\newcommand{\algrule}[1][.2pt]{\par\vskip.5\baselineskip\hrule height #1\par\vskip.5\baselineskip}
\begin{document}
\title{Supervisory Coordination of Robotic Fiber Positioners in Multi-Object Spectrographs}
\author{Matin Macktoobian, Denis Gillet, and Jean-Paul Kneib}
\date{Swiss Federal Institute of Technology in Lausanne (EPFL), 1015 Lausanne, Switzerland\\
Email: matin.macktoobian@epfl.ch\\[4mm]This work was financially supported by the Swiss National Science Foundation (SNF) grant number 20FL21\_185771 and the SLOAN ARC/EPFL agreement number SSP523.}
\maketitle
\begin{textblock}{14}(1.5,1)
	\noindent\textbf{\color{red}Published in ``Part of special issue:
		15th IFAC Symposium on Large Scale Complex Systems (LSS 2019)''\\ DOI: 10.1016/j.ifacol.2019.06.011}
\end{textblock}
\begin{abstract}                
In this paper, we solve the complete coordination problem of robotic fiber positioners using supervisory control theory. In particular, we model positioners and their behavioral specifications as discrete-event systems by the discretization of their motion spaces. We synthesize a coordination supervisor associated with a specific set of positioners. In particular, the coordination supervisor includes the solutions to the complete coordination problem of its corresponding positioners. Then, we use the backtracking forcibility technique of supervisory control theory to present an algorithm based on a completeness condition to solve the coordination problem similar to a reconfiguration problem. We illustrate the functionality of our method using an example.	
\end{abstract}
\textit{keywords}: coordination, robotic fiber positioners, discrete-event systems, supervisory control, spectroscopic surveys
\section{Introduction}
\doublespacing
Modern astronomy aims to study the evolution of the universe using cosmological spectroscopy \cite{mazets1982cosmic} surveys. Each survey is a map corresponding to a piece of the observable universe. For this purpose, many projects, e.g., DESI \cite{1aghamousa2016desi,2aghamousa2016desi}, MOONS \cite{cirasuolo2014moons}, etc., are carried out to develop telescopes equipped with spectrographs. Each spectrograph is connected to a set of fiber positioners. Each fiber positioner is assigned to a specific galaxy in the sky corresponding to a particular observation. Then, the spectral information of that galaxy is transferred to its spectrograph through the observing fiber positioner. To increase the throughput of each observation in view of the collected information, one takes positioner swarms into account. Optical fibers are mounted in a particular area of their hosting telescope called focal plane. The favorite hexagonal formation of positioners shall be dense enough to mount as many as possible positioners at the focal plane of the hosting telescope \cite{horler2018high,horler2018robotic}. Since the target assignment to an optical fiber is changed from one observation to another, a robotic positioner system is attached to each fiber to move it. Each robotic positioner is generally a rotational-rotational robot whose workspace overlaps with those of its neighbors. So, the collision avoidance, the trajectory planning, and the completeness associated with a set of positioners are challenging. In particular, a set of positioners is completely coordinated if all of its positioners point to their targets at the end of the coordination process corresponding to a specific observation. We seek the complete coordination under the assumption that all of the target positioners are assigned to some galaxies. 

The current solutions to the coordination problem of positioners lack general completeness. In particular, 
artificial potential fields \cite{macktoobian2013time,macktoobian2016time} are used to plan collision-free trajectories for positioners \cite{makarem2014collision}. The state of the art \cite{tao2018priority} combines a low-level artificial-potential-based navigator with a high-level state-machine-based decision maker to handle deadlocks and oscillations. So, the positioners with higher priorities are prioritized to be coordinated. However, the complete convergence of all positioners is not guaranteed. Supervisory control theory and discrete-event systems are promising candidates by which we attack the complete coordination problem. 

Discrete-event systems and supervisory control theory \cite{wonham2017supervisory} efficiently models and controls event-driven complex systems, respectively. In particular, automatic reconfiguration \cite{macktoob2017auto} is addressed using supervisory control theory. Each configuration of a discrete-event system exhibits a specific set of functionalities. The discrete-event system is reconfigured from one configuration to another by activating a controllable reconfiguration event. Then, a supervisor is synthesized based on the behavioral and the reconfiguration requirements of the discrete-event system. Then to execute the reconfiguration, the supervisor shall find a string of events from its current state to a target state at which the reconfiguration event is eligible to occur. Backtracking forcibility technique \cite{macktoobianautomatic} was developed to find the set of all forcible paths from the current state to the target state. 

The automatic reconfiguration of discrete-event systems and the coordination of robotic positioners are intrinsically similar in a systematic point of view. So in this paper, we seek a solution to the complete coordination problem of robotic positioners using supervisory control theory and reconfiguration of discrete-event systems. Our supervisory control approach models each robotic positioner as a rotational arm with only one DoF. This assumption is based on a trade-off according to which the state size of the overall system remains reasonably tractable in view of the supervisor computation and backtracking forcibility. On the other hand, this assumption may limit the number of available solutions to the problem; yet our coordination method is efficient enough to achieve the desired completeness. In particular, we model each positioner and its behavior as discrete-event systems. Then, we synthesize a supervisor, called \textit{coordination supervisor}, to control the overall behavior of the complex set of positioners. We define the plant model and the specifications such that the marked state of the synthesized supervisor represents the complete coordination of the system. Thus, the coordination problem is indeed reduced to finding forcible paths from the initial state of the supervisor to its target state. One notes that the cited process is similar to the solution checking of the reconfiguration problem briefly explained above. In other words, the complete coordination problem seeks the reachability of a particular state corresponding to a coordination supervisor. For this purpose, we propose an algorithm to realize the quoted completeness-checking process associated with a typical coordination problem. 

The remainder of the paper is organized as follows. A brief review of the supervisory control theory is represented in Section \ref{sec:background}. Section \ref{sec:supervision} defines the complete coordination problem in the language of a backtracking forcibility problem and proposes an algorithm to solve it. Section \ref{sec:example} includes an example to illustrate how our algorithm practically solves a complete coordination problem. Our concluding remarks are finally drawn in Section \ref{sec:conc}.
\section{Background}
\label{sec:background}
Supervisory control theory controls discrete-event systems modeled by the Ramadge-Wonham framework \cite{ramadge1987supervisory}. A discrete-event system is formally represented by a generator, say,
\begin{equation}
\textbf{G} = (Q,\Sigma,\delta,q_0,Q_m),
\end{equation}
\noindent where $\Sigma = \Sigma_{c} \dot{\cup} \Sigma_{u}$ is a finite alphabet of event labels, partitioned into the \textit{controllable} event labels and the \textit{uncontrollable} ones; $Q$ is the finite \textit{state set}; $\delta:Q \times \Sigma^{*} \rightarrow Q$ is the \textit{extended partial transition function}; $q_0$ is the \textit{initial state}; and $Q_{m} \subseteq Q$ is the subset of \textit{marked states}. The \textit{closed behavior} and the \textit{marked behavior} of \textbf{G} are the regular languages
\begin{subequations}
	\begin{empheq}[left={}]{align}
	L(\textbf{G}) &:= \{s\in \Sigma^{*}|\delta(q_0,s)!\},
	\label{eq:1}\\
	L_{m}(\textbf{G}) &:= \{s\in L(\textbf{G})|\delta(q_0,s) \in Q_{m}\}.
	\label{eq:2}
	\end{empheq}
\end{subequations}
\noindent Here $\delta(q_0,s)!$ means that $\delta(q_0,s)$ is defined.

A supervisory control function for \textbf{G} is a map 
\begin{equation}
	\textbf{V}:L(\textbf{G}) \rightarrow \Gamma
\end{equation}
in which\footnote{Operator $\text{Pwr}(\cdot)$ returns the power set of its argument set.}
\begin{equation}
	\Gamma = \{\gamma \in \text{Pwr}(\Sigma)|\gamma \supseteq \Sigma_{u}\}
\end{equation}
is the set of \textit{control patterns}. `\textbf{G} under supervision of \textbf{V}' is written as $\textbf{V}/\textbf{G}$. Given a sublanguage $M \subseteq L_{m}(\textbf{G})$ we define the \textit{marked behavior} of $\textbf{V}/\textbf{G}$ as 
\begin{equation}
	L_{m}(\textbf{V}/\textbf{G}) := L(\textbf{V}/\textbf{G}) \cap M
\end{equation}
$\textbf{V}$ is a 
\textit{marking nonblocking supervisory control} (MNSC) for the pair\footnote{Operator $\overline{(\cdot)}$ yields the prefix closure of its regular language argument.} $(M,\textbf{G})$ if
\begin{equation}
	\overline{L_{m}(\textbf{V}/\textbf{G})} = L(\textbf{V}/\textbf{G}).
\end{equation}
In practice, \textbf{V} is implemented by a \textit{supervisor} representing the maximally permissive controlled behavior $L_{m}(\textbf{V}/\textbf{G})$ subject to a generator specification, say \textbf{S}; we denote this computation by 
\begin{equation}
	\textbf{V} = \textbf{supcon} (\textbf{G},\textbf{S}).
\end{equation}
For details see, e.g., \cite{ramadge1987supervisory}. Given a generator \textbf{V}, and states $q,q' \in Q_{\textbf{V}}$, backtracking forcibility analysis is taken into account as
\begin{equation}
	Z := \mathcal{BFA} (\textbf{V},q,q')
\end{equation}
which yields set of all forcible paths which reach $q'$ from $q$ using backtracking forcibility \cite{macktoob2017auto}.
\section{Completeness-Seeking Supervision}
\label{sec:supervision}
Each positioner can rotate around its axis according to a specific number of discrete movements. In particular, given a positioner $\mathscr{P}$ and its step size $n^{\mathscr{P}}$, $\mathscr{P}$ rotates $2\pi/n^{\mathscr{P}}$ radians during each of its motion steps. From now on, $\mathcal{P}$ denotes the set of all positioners corresponding to a specific telescope. We define the notions of ``forward event'' and ``backward event'' as follows.
\begin{defn}[Forward/Backward Events]
	Let $\mathscr{P} \in \mathcal{P}$ be a positioner with motion step size $n^{\mathscr{P}}$. Then, if the controllable \textit{forward event} $v^{\mathscr{P}}$ (resp., the \textit{backward event} $w^{\mathscr{P}}$) is enabled, $\mathscr{P}$ rotates $2\pi/n^{\mathscr{P}}$ radians in a clockwise (resp., counterclockwise) direction around its axis. 
\end{defn}  
We also define counters corresponding to the number of the required movements to reach a target in different directions as follows.
\begin{defn}[Forward/Backward Counters]
	Let $\mathscr{P} \in \mathcal{P}$ be a positioner with motion step size $n^{\mathscr{P}}$. Given a specific position at the motion space of $\mathscr{P}$, the \textit{forward counter} $n^{\mathscr{P}}_v$ (resp., the \textit{backward counter} $n^{\mathscr{P}}_w$) represents the required number of clockwise (resp., counterclockwise) motion steps to reach the target position from the current position of $\mathscr{P}$.
\end{defn} 
\begin{figure}[tb]
	\centering
	\includegraphics[scale=0.5]{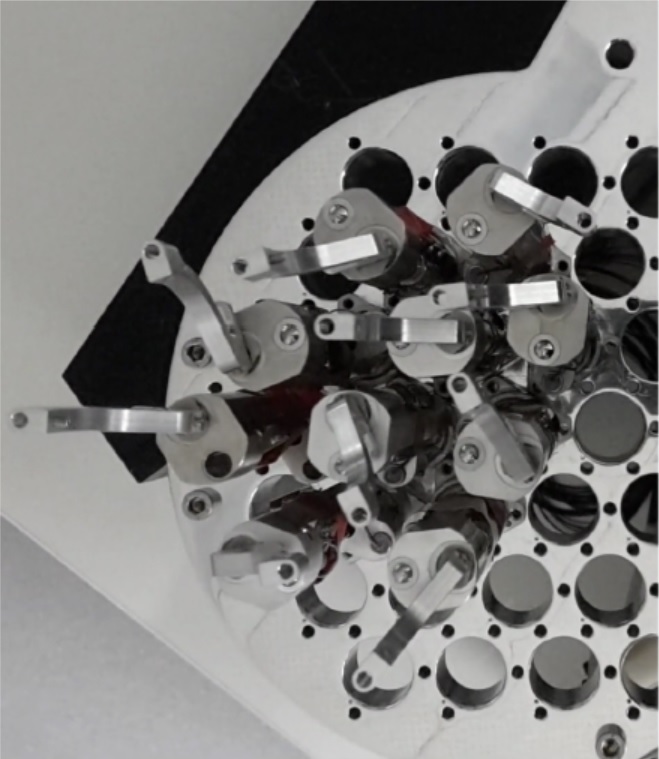}
	\caption{The hexagonal formation of a set of positioners (reprinted from \cite{horler2018robotic} with permission)}
	\label{fig:hex}
\end{figure}
In a hexagonal neighborhood, see, Figure \ref{fig:hex}, around a specific positioner, the maximum number of the positioners which may be involved in a colliding scenario to block each other is three. So, the relative priority of those typical positioners shall be defined in the language of events as follows.
\begin{defn}[Priority Events]
	Let $\mathcal{N} \subset \mathcal{P}$ be a set of two positioners around positioner $\mathscr{P} \in \mathcal{P}$ with which construct a colliding scenario. Let also $r(\mathscr{P})$ denote the relative priority of $\mathscr{P}$'s target compared to those of the positioners of $\mathcal{N}$. Then, we define the following controllable events.
	\begin{itemize}
		\item if $(\forall \mathscr{P}' \in \mathcal{N})~ r(\mathscr{P}) < r(\mathscr{P}')$, then the controllable event $l^{\mathscr{P}}$ is exclusively enabled;
		\item if $(\forall \mathscr{P}' \in \mathcal{N})~ r(\mathscr{P}) > r(\mathscr{P}')$, then the controllable event $h^{\mathscr{P}}$ is exclusively enabled;
		\item otherwise, the controllable event $m^{\mathscr{P}}$ is exclusively enabled.
	\end{itemize}
\end{defn}
\begin{figure}[tb]
	\centering
	\includegraphics[scale=0.6]{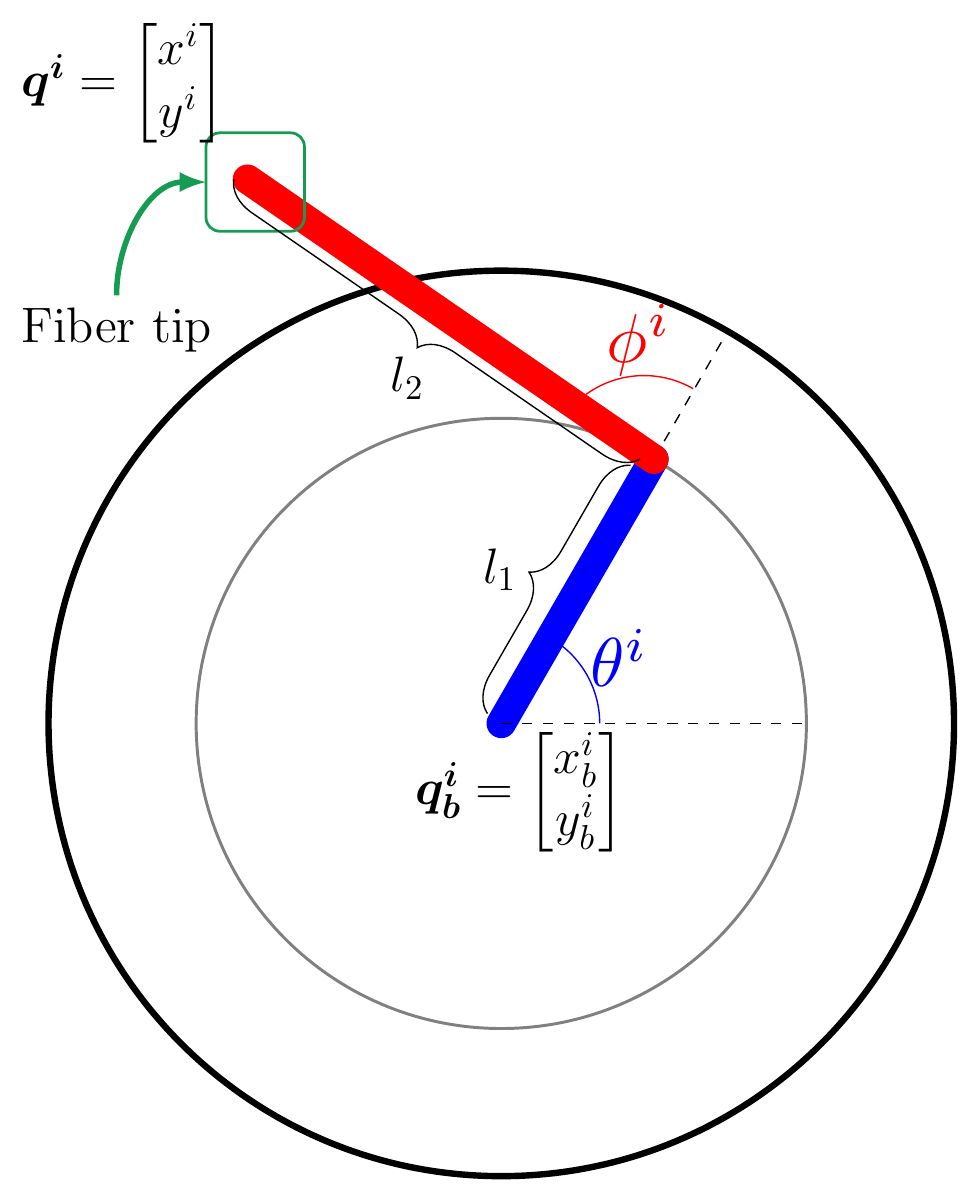}
	\caption{The discrete-event model of a typical positioner $\mathscr{P}$ with  with $n^{\mathscr{P}} = 3$ (The indices of the events are intentionally omitted for the better readability of the image.)}
	\label{fig:plant}
\end{figure}
We take some uncontrollable events into account to keep track of any potential colliding situation around a typical positioner as follows.
\begin{defn}[Collision/Free Events]
	Given positioner $\mathscr{P} \in \mathcal{P}$, if a neighboring positioner enters the the safety zone of $\mathscr{P}$, uncontrollable \textit{collision event} $k^{\mathscr{P}}$ occurs. In contrast, if all of the neighboring positioners which have resided at the safety zone of $\mathscr{P}$ leave that area, then uncontrollable \textit{free event} $e^{\mathscr{P}}$ occurs.
\end{defn}
We define the notion of a positioner as a discrete-event system as follows.
\begin{defn}[Positioner]
	Let $\mathscr{P} \in \mathcal{P}$ be a positioner. Given motion step size $n^{\mathscr{P}}$, forward event $v^{\mathscr{P}}$, backward event $w^{\mathscr{P}}$, forward counter $n_{v}^{\mathscr{P}}$, backward counter $n_{w}^{\mathscr{P}}$, priority events $l^{\mathscr{P}}$, $m^{\mathscr{P}}$, and $h^{\mathscr{P}}$, collision event $k^{\mathscr{P}}$, and free event $e^{\mathscr{P}}$, the generator $\textbf{G}^{\mathscr{P}}$ associated with $\mathscr{P}$ is defined as a discrete-event system
	\begin{equation}
	\textbf{G}^{\mathscr{P}} := (Q^{\mathscr{P}}_\textbf{G},\Sigma^{\mathscr{P}}_\textbf{G}, \delta^{\mathscr{P}}_\textbf{G},{q_{0}}^{\mathscr{P}}_\textbf{G},{Q_m}^{\mathscr{P}}_\textbf{G}),
	\end{equation}
	where
	\begin{itemize}[leftmargin=*]
		\item $Q^{\mathscr{P}}_\textbf{G} := \bigcup\limits_{j=0}^{{n^{\mathscr{P}}}}\{q_{j},q^{1}_{j},q^{2}_{j}\}$,
		\item $\Sigma^{\mathscr{P}}_\textbf{G} := \{v^{\mathscr{P}},w^{\mathscr{P}},l^{\mathscr{P}},m^{\mathscr{P}},h^{\mathscr{P}}, k^{\mathscr{P}}, e^{\mathscr{P}}\}$,
		\item \begin{align*}
			\delta^{\mathscr{P}}_\textbf{G} := (\forall j \vert 0 \le &j \le n^{\mathscr{P}}-1)\{ \delta(q_{j},v^{\mathscr{P}}):=q_{j+1},\\ &\delta(q_{j+1},w^{\mathscr{P}}):=q_{j},\\ &\delta(q^{j},k^{\mathscr{P}}):=q_{j}^{1},\\
			&\delta(q^{1}_{j},k^{\mathscr{P}}):=q^{2}_{j},\\
			&\delta(q^{2}_{j},m^{\mathscr{P}}):=q_{j}^{1},\\
			&\delta(q^{1}_{j},e^{\mathscr{P}}):=q_{j}\},\\
			&\delta(q^{2}_{j},h^{\mathscr{P}}) = \delta(q^{1}_{j},h^{\mathscr{P}}) := q_{j+1},\\
			&\delta(q^{2}_{j+1},l^{\mathscr{P}}) = \delta(q^{1}_{j+1},l^{\mathscr{P}}) := q_{j}\}\\ &\dot{\cup}\{\delta(q_{0},w^{\mathscr{P}}) := q_{n^{\mathscr{P}}},\\
			&\delta(q_{n^{\mathscr{P}}},v^{\mathscr{P}}) := q_{0}\},
		\end{align*}
		\item ${q_{0}}^{\mathscr{P}}_\textbf{G} := \text{a } q \in Q^{\mathscr{P}}_\textbf{G} \text{ according to the initial position of } \mathscr{P}$,
		\item ${Q_m}^{\mathscr{P}}_\textbf{G} := \text{a } q \in Q^{\mathscr{P}}_\textbf{G} \text{ based on the target position of } \mathscr{P}$.
	\end{itemize} 
\end{defn}
\begin{rem}
	We note that the occurrence of collision event $k^{\mathscr{P}}$ has to functionally preempt the other events eligible at a specific state of $\textbf{G}^{\mathscr{P}}$.
\end{rem}
Figure \ref{fig:plant} depicts the generator corresponding to the discrete-event model of a typical positioner $\mathscr{P}$ with $n^{\mathscr{P}}=3$. Each forward (resp., backward) event moves $\mathscr{P}$ $2\pi/3$ radians in a clockwise (resp., counterclockwise) direction. If no neighboring positioner enters the safety zone of $\mathscr{P}$, then it always remains in the scope of the lower-level states of its generator\footnote{From now on, symbols $[\bm{q}]$ and $\braket{\sigma}$ denote state $q$ and event $\sigma$, respectively.}, i.e., $[\bm{0}]$, $[\bm{1}]$, and $[\bm{2}]$. Otherwise, it may encounter one or two neighboring positioners at its safety zone, thereby occurring $k^{\mathscr{P}}$ events at $[\bm{0}]$ and/or $[\bm{1}]$. In the presence of only one colliding positioner, if it leaves the safety zone of $\mathscr{P}$, then the free event $e^{\mathscr{P}}$ occurs and $\mathscr{P}$ returns to its normal state at the lower level of $\textbf{G}^{\mathscr{P}}$. If the colliding positioner remains in the safety zone of $\mathscr{P}$, then the positioner with the higher priority moves clockwise to its target, and the other one with the lower priority moves in a counterclockwise direction. So, not only the collision is avoided, but the deadlock is also handled. In the case of three colliding positioners, there are three relative priorities. In particular, the positioners with the highest and the lowest priorities moves clockwise and counterclockwise, respectively. Furthermore, the positioner with the medium priority remains at its position. However, its colliding state is changed since at least one of the other colliding peers goes far from it.

\begin{figure}[t]
	\centering
	\includegraphics[scale=0.3]{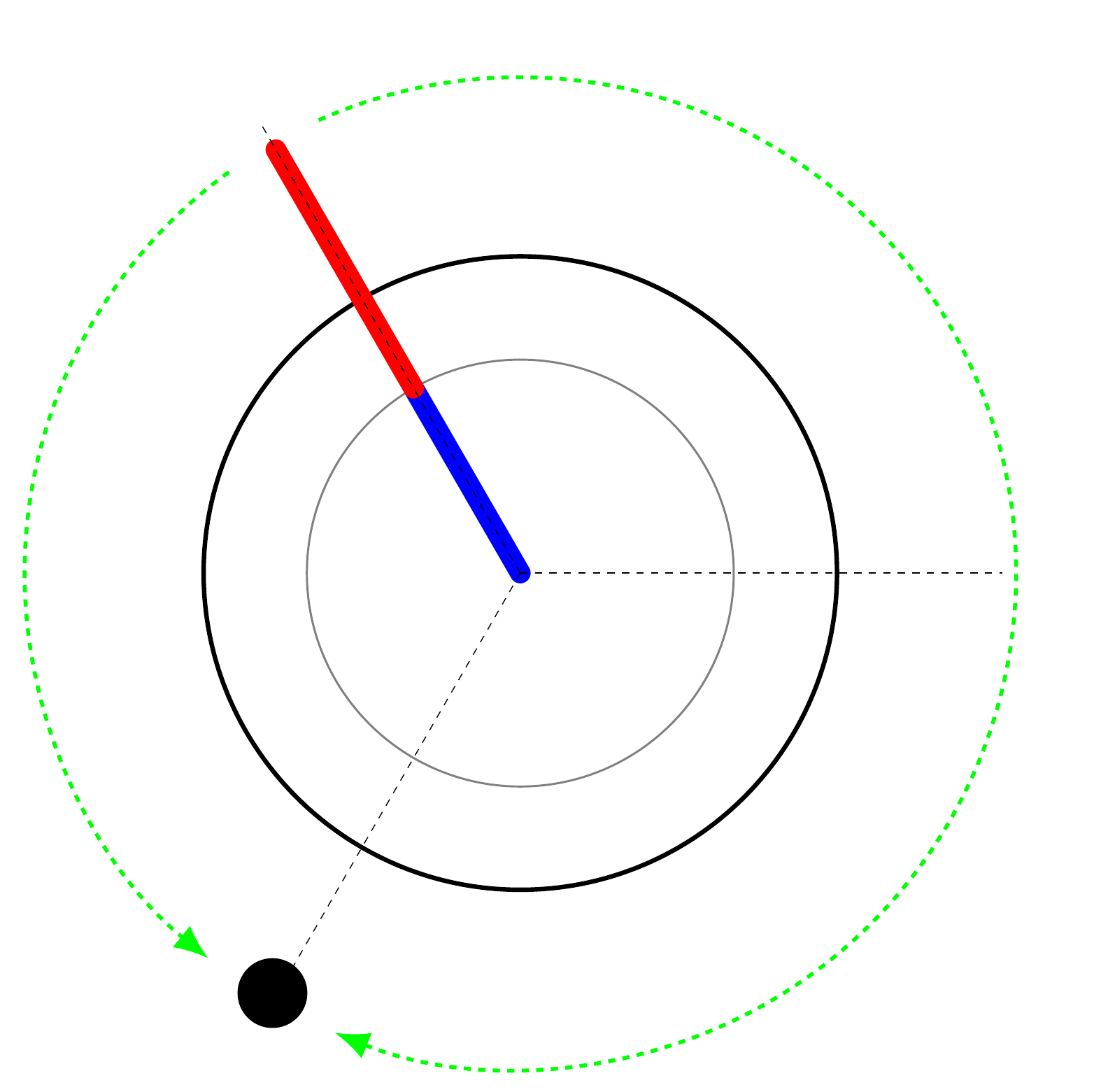}
	\caption{The initial and the target points corresponding to a typical positioner $\mathscr{P}$ with $n^{\mathscr{P}}=3$ (the blue and the red lines represent the arms of the depicted positioner.)}
	\label{fig:specExam}
\end{figure}

Now, we define the specification which determines the target position corresponding to a particular positioner.
\begin{defn}[Specification]
	Let $\mathscr{P} \in \mathcal{P}$ be a positioner. Then, considering forward counter $n_v^{\mathscr{P}}$ and backward counter $n_w^{\mathscr{P}}$ associated with $\mathscr{P}$, specification $\textbf{S}^{\mathscr{P}}$ corresponding to $\mathscr{P}$ is defined as a discrete-event system
	\begin{equation}
	\textbf{S}^{\mathscr{P}} := (Q^{\mathscr{P}}_\textbf{S},\Sigma^{\mathscr{P}}_\textbf{S}, \delta^{\mathscr{P}}_\textbf{S},{q_{0}}^{\mathscr{P}}_\textbf,{Q_m}^{\mathscr{P}}_\textbf{S}),
	\end{equation}
	where
	\begin{itemize}[leftmargin=*]
		\item $Q^{\mathscr{P}}_\textbf{S} := \biggl\{\bigcup\limits_{j=1}^{n_v^{\mathscr{P}}+n_w^{\mathscr{P}}-2}q_{j}\biggr\}\dot{\cup} \{q_{0},q_m\}$,
		\item $\Sigma^{\mathscr{P}}_\textbf{S} := \{v^{\mathscr{P}},w^{\mathscr{P}}\}$,
		\item \begin{align*}
			\delta^{\mathscr{P}}_\pi := (\forall j \vert 0 \le j \le n_v^{\mathscr{P}}-2)&\{ \delta(q_{j},v^{\mathscr{P}}):=q_{j+1},\\ &\delta(q_{j+1},w^{\mathscr{P}}):=q_{j}\} \dot{\cup}\\
			&\hspace*{-3cm}(\forall j \vert n_v^{\mathscr{P}}+1 \le j \le n_w^{\mathscr{P}}-1)\{ \delta(q_{j},w^{\mathscr{P}}):=q_{j+1},\\ &\hspace*{-1.5cm}\delta(q_{j+1},v^{\mathscr{P}}):=q_{j}\} \dot{\cup}\\
			&\hspace*{-1.5cm}\{\delta(q_{n_v^{\mathscr{P}}-1},v^{\mathscr{P}}) = \delta(q_{n_w^{\mathscr{P}}-2},w^{i}) := q_m\}
			\end{align*}
			\item ${q_{0}}^{\mathscr{P}}_\textbf{S} := q_{0}$,
			\item ${Q_m}^{\mathscr{P}}_\textbf{S} := q_m$.
	\end{itemize} 
\end{defn}
The specification corresponding to a particular positioner in fact determines the paths via which the positioner reaches its target position from its current position in both clockwise and counterclockwise directions. For example, suppose a positioner $\mathscr{P}$ with $n^{\mathscr{P}}=3$. The initial and the target positions of $\mathscr{P}$ are illustrated in Figure \ref{fig:specExam}, and Figure \ref{fig:spec} depicts its corresponding specification.
\begin{figure}[tb]
	\centering
	\includegraphics[scale=1]{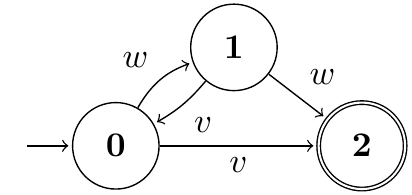}
	\caption{The specification corresponding to the Figure \ref{fig:specExam} (The event labels are intentionally omitted to improve the figure clarity.)}
	\label{fig:spec}
	\end{figure}

Now, we utilize supervisory control theory to generate a supervisor in which the solutions to the coordination problem are embedded. In particular, we first compute the overall model \textbf{G} of the positioners system by synchronizing\footnote{$n$-ary operator $\parallel$ computes the synchronous product of its argument generators.} the generators corresponding to all of the positioners.
\begin{equation}
\textbf{G} := \mathop{\parallel}_{\mathscr{P} \in \mathcal{P}} \textbf{G}^{\mathscr{P}} 
\end{equation}
We also obtain the overall model \textbf{S} of the specifications.
\begin{equation}
\textbf{S} := \mathop{\parallel}_{\mathscr{P} \in \mathcal{P}} \textbf{S}^{\mathscr{P}} 
\end{equation} 
Then, we\footnote{Operator $\textbf{allevents}(\cdot)$ returns a one-state generator with a selfloop including all of the states of its argument generator.}\footnote{Operator $\textbf{supcon}(\cdot,\cdot)$ computes a nonblocking supervisor corresponding to its first-argument generator with respect to its second-argument generator.} synthesize the coordination supervisor \textbf{V}.
\begin{equation}
\label{eq:V}
\textbf{V} = \textbf{supcon}(\textbf{G},(\textbf{allevents}(\textbf{G})\parallel\textbf{S}))
\end{equation}
The following theorem determines the role of a coordination supervisor in the definition and the solution to the complete coordination problem associated with it. 
\begin{thm}
	Let $\mathcal{P}$ be a set of positioners modeled by $\{\textbf{G}^{\mathscr{P}}|\mathscr{P} \in \mathcal{P} \}$ whose specifications are denoted by $\{\textbf{S}^{\mathscr{P}}|\mathscr{P} \in \mathcal{P} \}$. Let \textbf{V} be a coordination supervisor synthesized to control $\textbf{G} := \displaystyle\mathop{\parallel}_{\mathscr{P} \in \mathcal{P}} \textbf{G}^{\mathscr{P}}$ with respect to $\textbf{S} := \displaystyle\mathop{\parallel}_{\mathscr{P} \in \mathcal{P}} \textbf{S}^{\mathscr{P}}$. Then,
	\begin{enumerate}[label=(\roman*)]	
		\item \textbf{V} has only one marked state;
		\item \textbf{G} is completely coordinated at the marked state of \textbf{V}.
	\end{enumerate}
\end{thm}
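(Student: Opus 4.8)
The plan is to reduce both parts to the single-marked-state structure of the component automata together with the standard realization of $\textbf{supcon}$. I would first record two elementary facts. The synchronous product marks a tuple of states exactly when every coordinate is marked, so the marked set of $\mathop{\parallel}_{\mathscr{P} \in \mathcal{P}} \textbf{A}^{\mathscr{P}}$ is the Cartesian product of the component marked sets. Moreover, the reachable realization of $\textbf{supcon}(\textbf{G}, \textbf{C})$ is a trim deterministic subgenerator of the product $\textbf{G} \times \textbf{C}$ whose marked states are precisely those pairs $(x, y)$ with $x$ marked in $\textbf{G}$ and $y$ marked in $\textbf{C}$; this is the standard Ramadge--Wonham description of a nonblocking supervisor and is what I will lean on throughout.

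For part (i) I would count marked states on each side. By the Specification definition, each $\textbf{S}^{\mathscr{P}}$ has the singleton marked set $\{q_m\}$, so $\textbf{S} = \mathop{\parallel}_{\mathscr{P} \in \mathcal{P}} \textbf{S}^{\mathscr{P}}$ has the unique marked state $(q_m^{\mathscr{P}})_{\mathscr{P} \in \mathcal{P}}$; since $\textbf{allevents}(\textbf{G})$ contributes one marked state, the lifted specification $\textbf{C} := \textbf{allevents}(\textbf{G}) \parallel \textbf{S}$ again has exactly one marked state. Symmetrically, the Positioner definition makes the marked set of each $\textbf{G}^{\mathscr{P}}$ a singleton, so $\textbf{G}$ has a unique marked state. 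Any marked state of $\textbf{V}$ is then a pair whose coordinates are the unique marked states of $\textbf{G}$ and of $\textbf{C}$; there is only one such pair, giving at most one marked state. Because $\textbf{supcon}$ returns a coreachable (nonblocking) supervisor, a nonempty $\textbf{V}$ necessarily reaches a marked state, so $\textbf{V}$ has exactly one.

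For part (ii) I would unfold the meaning of that unique marked state on the plant side. Its $\textbf{G}$-coordinate is the unique marked state of $\textbf{G} = \mathop{\parallel}_{\mathscr{P} \in \mathcal{P}} \textbf{G}^{\mathscr{P}}$, i.e. the tuple in which every $\textbf{G}^{\mathscr{P}}$ occupies its own marked state $Q_m^{\mathscr{P}}$. By the Positioner definition, $Q_m^{\mathscr{P}}$ is exactly the state encoding the target position of $\mathscr{P}$, so residing there means $\mathscr{P}$ points at its target. Hence at the marked state of $\textbf{V}$ every positioner in $\mathcal{P}$ simultaneously occupies its target, which is precisely the definition of complete coordination; this establishes (ii).

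The main obstacle I anticipate is not the counting but the nonemptiness used to close part (i): one must guarantee that the supremal controllable sublanguage underlying $\textbf{V}$ is nonempty, i.e. that the joint-target state survives the $\textbf{supcon}$ fixpoint rather than being pruned as blocking or as violating controllability. Here the event partition is essential --- the forward and backward events $v^{\mathscr{P}}, w^{\mathscr{P}}$ are controllable while the collision and free events $k^{\mathscr{P}}, e^{\mathscr{P}}$ are uncontrollable --- so I would argue, under the standing assumption that every target is feasibly assigned, that no uncontrollable event drives the closed loop out of the specified language and that the specification paths keep the joint target coreachable. This feasibility is exactly the property that the subsequent backtracking-forcibility algorithm is designed to certify.
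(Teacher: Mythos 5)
Your proposal is correct and follows essentially the same route as the paper: part (i) by counting marked states through the synchronous products and the \textbf{supcon} construction, and part (ii) by unfolding the unique marked state into simultaneous occupancy of every positioner's target state. Your closing observation about nonemptiness is a refinement the paper's own proof glosses over --- the paper's argument really only yields \emph{at most} one marked state, and you are right that guaranteeing the joint target survives the \textbf{supcon} fixpoint is precisely what the subsequent $\mathcal{CCC}$/backtracking-forcibility check is for.
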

\begin{proof}
	$(i)$ By the definition, each element of $\{\textbf{G}^{\mathscr{P}}|\mathscr{P} \in \mathcal{P} \}$ and $\{\textbf{S}^{\mathscr{P}}|\mathscr{P} \in \mathcal{P} \}$ has only one marked state. Then, according to the definition of the synchronous product operator \cite{wonham2017supervisory}, each of \textbf{G} and \textbf{S} has only one final state. Additionally, since \textbf{allevents(G)} includes only a single marked state, $\textbf{allevents}(\textbf{G})\parallel\textbf{S}$ has also only one marked state. Since every generator argument of \textbf{supcon} has only one marked state, we conclude that \textbf{V} possesses only one marked state, as well.\\ 
	$(ii)$ Each positioner is exclusively reached at its target position when its corresponding generator and the generator of its specification reside at their marked states. Then, the marked states of \textbf{G} and \textbf{S} represent the states whose simultaneous occupancies indicate the reachability of all of their corresponding positioners. If \textbf{V} reaches its marked state, then the cited simultaneous occupancies are realized. Therefore, \textbf{G} is complete when \textbf{V} resides at its marked state.
	\qed 
\end{proof}
\begin{algorithm}[b]
	\renewcommand{\thealgorithm}{}
	\caption{Complete Coordination Checker ($\mathcal{CCC}$)}
	\label{alg:CCC}
	\begin{algorithmic}[1]
		\INPUT 
		\Statex $\textbf{V}$ \Comment Coordination supervisor
		\OUTPUT 
		\Statex Complete coordination feasibility
		\algrule[1pt]
		\If{$\mathcal{BFA}(\textbf{V},{q_{0}}_\textbf{V},{q_{m}}_{\textbf{V}}) \neq \varnothing$}
		\Return True
		\Else \Return False 
		\EndIf
	\end{algorithmic}
\end{algorithm}
A supervisor always starts its supervision process from its initial state. So, the strings of events which can forcibly reach the marked state from the initial state are in fact the solutions to the complete coordination problem. Backtracking forcibility is a promising approach to find the forcible paths which reach one state from another \cite{macktoob2017auto,macktoobianautomatic}. Thus, the complete coordination problem can be written in the language of backtracking forcibility as follows.
\begin{defn}[Forcibility]
	Let \textbf{G} be a discrete-event system. Given two states $q,q' \in Q_{\textbf{G}}$ and a string $s \in \Sigma^{*}_{\textbf{G}}$, if $s$ forcibly reaches $q'$ from $q$, then the following ternary relation holds.
	\begin{equation}
	\mathcal{F}(s,q,q')
	\end{equation}
\end{defn}
\begin{plm}[Complete Coordination]
	Let \textbf{G} be a discrete-event system representing a system of positioners, \textbf{S} be the specification associated with \textbf{G}, and \textbf{V} be the coordination supervisor synthesized based them. Let ${q_{0}}_\textbf{V}$ and ${q_{m}}_{\textbf{V}}$ be the initial and the marked states of \textbf{V}, respectively. Subject to an appropriate specification of forcibility, check whether the following \textit{completeness condition} holds.
	\begin{equation}
	\label{eq:condition}
	(\exists s \in \Sigma^{*}_{\textbf{V}}) \mathcal{F}(s,{q_{0}}_\textbf{V},{q_{m}}_{\textbf{V}})
	\end{equation}
\end{plm}   
We solve the problem above using the backtracking forcibility notion of supervisory control theory as illustrated by the $\mathcal{CCC}$ algorithm. In particular, the completeness condition (\ref{eq:condition}) is checked. $\mathcal{BFA}$ function collects the forcible paths (if any) belonging to the marked behavior of \textbf{V} which forcibly reach the marked state of \textbf{V} from its initial state. If the result is nonempty, then the complete coordination of the desired system is feasible.
\begin{figure}[t]
	\begin{center}
		\includegraphics[scale=0.25]{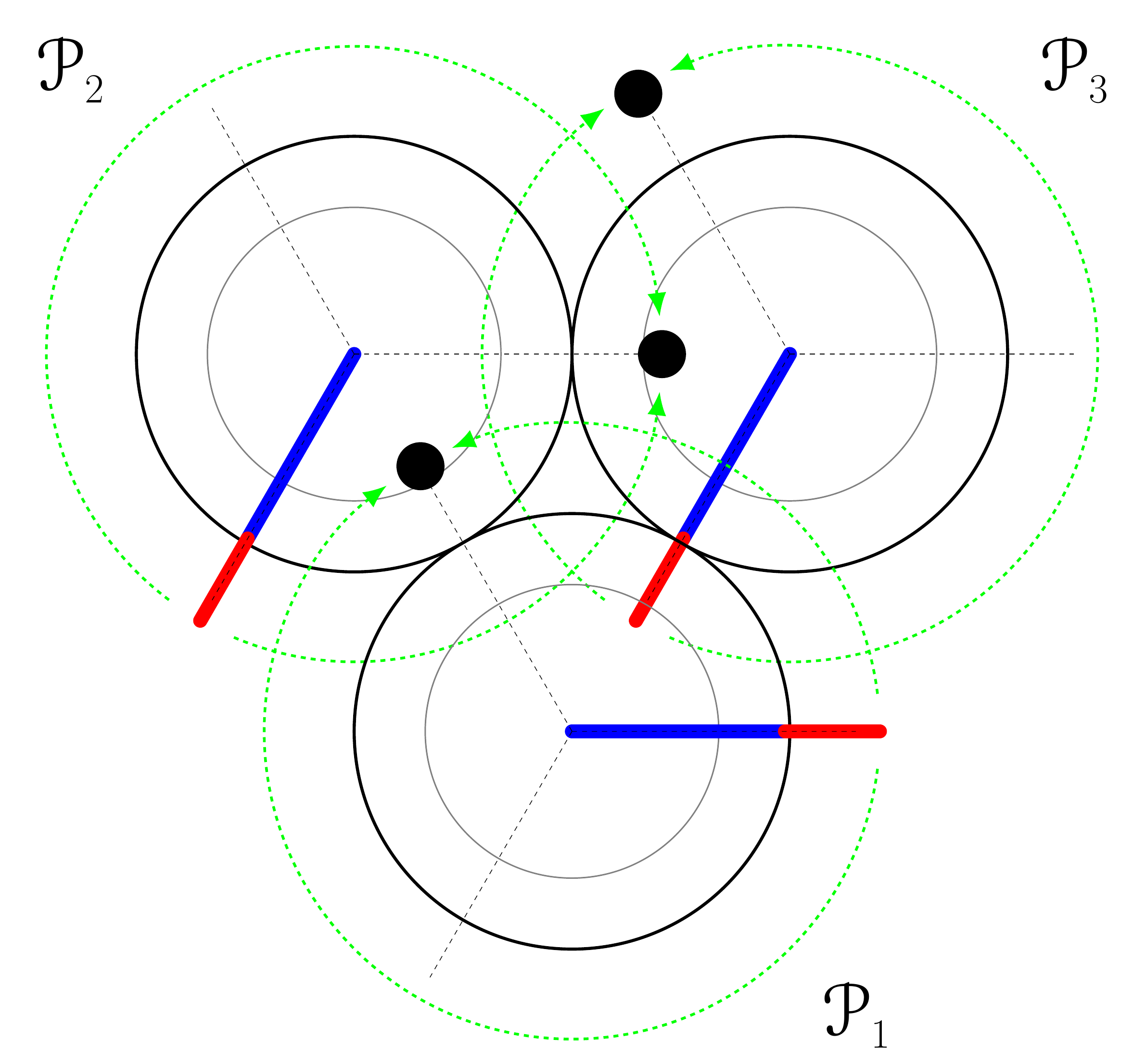}
		\caption{A coordination problem}
		\label{fig:examp}
	\end{center}
\end{figure} 

In the next section, we present an example whose completeness problem is assessed using $\mathcal{CCC}$ algorithm.
\section{Example}
\label{sec:example}
We solve a complete coordination problem using supervisory control theory implemented in TCT software \cite{tct2017supervisory}. We consider three positioners $\textbf{G}^{\mathscr{P}_{1}}$, $\textbf{G}^{\mathscr{P}_{2}}$, and $\textbf{G}^{\mathscr{P}_{3}}$ as depicted in Figure  \ref{fig:examp} whose initial states are $[\bm{0}]$, $[\bm{6}]$, and $[\bm{6}]$, respectively. The events of the positioners are specified in Table \ref{tbl:example} where events with odd labels are controllable, and those with even labels are uncontrollable.
\begin{table}[b]
	\begin{center}
		\caption{Event specifications of the example positioners (the indices of the events are intentionally omitted for better readibility.)}
		\label{tbl:example}
		\begin{tabular}{cccccccc}
			\toprule         
			Positioner&$v$&$w$&$l$&$m$&$h$&$k$&$e$\\
			\midrule 
			$\mathscr{P}_{1}$&71&73&11&13&15&10&40\\
			$\mathscr{P}_{2}$&81&83&21&23&25&20&50\\
			$\mathscr{P}_{3}$&91&93&31&33&35&30&60\\
			\bottomrule
		\end{tabular}
	\end{center}
\end{table}
Suppose the relative priorities of the positioners are the same as their indices. Specifications $\textbf{S}^{\mathscr{P}_{1}}$, $\textbf{S}^{\mathscr{P}_{2}}$, and $\textbf{S}^{\mathscr{P}_{3}}$ are illustrated in Figures \ref{fig:s1}, \ref{fig:s2}, and \ref{fig:s3}, respectively.

We compute the coordination supervisor \textbf{V} according to (\ref{eq:V}) whose marked state is $[\bm{14}]$. Using the $\mathcal{CCC}$ algorithm, we obtain the (shortest) forcible path\footnote{Symbol $\braket{\sigma_{1}, \cdots,\sigma_{n}}$ denotes a string constructed by its argument events.} $Z = \braket{73,10,30,15,31,81,20}$ from $[\bm{0}]$ to $[\bm{14}]$. In particular, \textbf{V} commands $\mathscr{P}_1$ to rotate in a counterclockwise direction by enabling $\braket{73}$. Then, $\mathscr{P}_1$ enters the safety zone of $\mathscr{P}_3$. So, the collision events $\braket{10}$ and $\braket{30}$ occur. Since $\mathscr{P}_1$ has a higher priority, $\braket{15}$ drives $\mathscr{P}_1$ to its target point in a clockwise direction. In contrast, $\mathscr{P}_3$ goes far from $\mathscr{P}_1$ to avoid any collision by $\braket{31}$. So, $\mathscr{P}_3$ eventually reaches its target point. $\braket{71}$ initially moves $\mathscr{P}_2$ in a clockwise direction. However, it enters the safety zone of $\mathscr{P}_1$, so $\braket{20}$ occurs, and $\mathscr{P}_2$ changes it direction which gives rise to its convergence.
\begin{figure}[bt]
	\centering
	\includegraphics[scale=1]{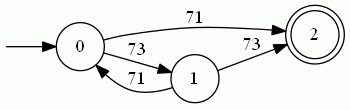}
	\caption{The generator of $\textbf{S}^{\mathscr{P}_{1}}$}
	\label{fig:s1}
\end{figure}
\begin{figure}[tb]
	\centering
	\includegraphics[scale=1]{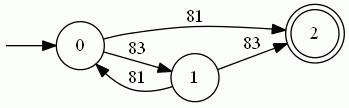}
	\caption{The generator of $\textbf{S}^{\mathscr{P}_{2}}$}
	\label{fig:s2}
\end{figure}
\begin{figure}[tb]
	\centering
	\includegraphics[scale=1]{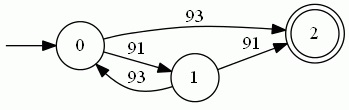}
	\caption{The generator of $\textbf{S}^{\mathscr{P}_{3}}$}
	\label{fig:s3}
\end{figure} 
\section{Conclusions}
\label{sec:conc}	
We reported a supervisory-control-based solution to the complete coordination problem of robotic fiber positioners. We illustrated how to model positioners systems and their specifications as discrete-event systems. We then generated a coordination supervisor in which the solution(s) to the problem were embedded. We found the solution(s) using the backtracking forcibility. 

One observes that our proposed method is straightforward to check the complete coordination feasibility for 1 DoF positioners. However, the extensive discretization of the space motions of positioners may give rise to intractably large discrete-event systems. A future stream of research would be the usage of state tree structures to efficiently coordinate extremely complex systems of positioners. Furthermore, untimed discrete-event systems do not consider any temporal requirements or constraints in the course of modeling and control process. Thus, one may take timed discrete-event systems into account to model positioners not to skip potential important temporal characteristics of the system.   

\nocite{*}
\bibliographystyle{IEEEtran}
\bibliography{ifacconf}
            
\end{document}